\newtheorem{proposition}{Proposition}
\newcommand{\mi}{\mathrm{i}}
\begin{document}

\title{\LARGE Tesla meets Helstrom: a Wireless-Powered Quantum Optical System}

\author{Ioannis Krikidis,~\IEEEmembership{Fellow, IEEE}

\thanks{I. Krikidis is with the Department of Electrical and Computer Engineering, University of Cyprus, Cyprus (e-mail: krikidis@ucy.ac.cy).}}

\maketitle

\begin{abstract}
This letter investigates a novel wireless-powered quantum optical communication system, in which a batteryless quantum transmitter harvests energy from a classical radio-frequency source to transmit quantum coherent states. The transmission employs M-ary phase shift keying (M-PSK) modulation over an optical channel impaired by thermal noise, and the fundamental detection performance is evaluated using the Helstrom bound. An optimization framework is proposed that jointly determines the optimal quantum measurement and the energy-harvesting time fraction to maximize the effective rate under a block time constraint. Analytical expressions are derived for special cases, while semidefinite programming techniques are employed for the general M-PSK scenario. Numerical results validate the unimodal nature of the effective rate function and demonstrate the impact of the optimal design parameters.
\end{abstract}

\vspace{-0.1cm}
\begin{keywords}
Wireless power transfer, RF energy harvesting, quantum optical communications, Helstrom bound, M-PSK. 
\end{keywords}

\vspace{-0.1cm}
\section{Introduction}

\IEEEPARstart{W}{ireless} power transfer (WPT) using dedicated and fully controlled radio-frequency (RF) signals is a key enabler for sixth-generation (6G) wireless communication systems. It supports both energy sustainability and communication connectivity for ultra-low-power and/or batteryless devices, such as Internet-of-Things (IoT) nodes and sensors \cite{KRI}. This technology is primarily driven by recent advancements in electronics and signal processing, which have significantly reduced the power requirements of modern devices (in line with Koomey's law), as well as by notable improvements in RF rectification range and efficiency \cite{BRU}. Additionally, the exponential growth in the number of connected devices renders traditional energy supply methods increasingly impractical, further highlighting the need for scalable wireless energy solutions. 

WPT has been extensively studied in recent years from both a communication and networking standpoint (focusing on waveform design, communication protocols, and network architectures) \cite{KRI,BRU}, as well as from microwave and electronics perspectives, including rectenna circuit design and system architectures \cite{NIO}. More recently, it has also been incorporated into 3GPP studies as a promising enabling technology for passive and ambient IoT systems. A WPT architecture of particular practical interest is the wireless powered communication network (WPCN), where a batteryless transmitter harvests energy from an RF source and uses it immediately for data transmission within a fixed time block \cite{BI}. WPCNs have been studied in various communication scenarios, including multi-user networks, sensor networks, MIMO, semantic communications, etc. \cite{BI, KRI2, KAR}. While most existing works assume RF-based WPT/communication, recent studies explore more sophisticated and flexible setups where hybrid sources harvest energy from RF signals but transmit data over higher frequency bands (and vice-versa), such as optical or visible light communication \cite{LAB}.

On the other hand, quantum communication is an emerging technology that enables the exchange of qubits or classical information by leveraging fundamental principles of quantum mechanics, such as superposition and entanglement \cite{DJO}. This paradigm has garnered significant interest due to its potential to achieve high data rates and inherently secure transmissions. In this work, we focus on the transmission of classical information over a quantum channel, where the information is encoded into quantum states that serve as carriers. In his pioneering work \cite{HEL}, Helstrom introduced the foundations of quantum communication and established a comprehensive mathematical framework for quantum detection and estimation. Among the various quantum communication strategies, coherent-state modulation stands out for its practical implementation and notable energy efficiency, as reliable transmission can be achieved using minimal photon energy \cite{JUN}. These properties create a natural synergy between WPCNs and quantum optical communications, where energy harvested from RF signals can be efficiently used to power low-energy quantum transmitters, paving the way for sustainable and batteryless quantum communication networks.

This letter, for the first time, investigates a WPCN-based quantum optical communication system where a batteryless quantum transmitter harvests energy from a classical RF source to power a quantum link using coherent-state M-ary phase shift keying (M-PSK) modulation. Leveraging the fundamental Helstrom bound, a harvest-then-transmit protocol is designed, and an optimization framework is developed to jointly determine the optimal quantum measurement and the energy harvesting (EH) time fraction that maximize the effective rate under a block time constraint. The framework minimizes the Helstrom error probability for a given EH duration and then performs a one-dimensional search to maximize the effective rate. Closed-form expressions are obtained for BPSK and noiseless M-PSK, while a semidefinite program addresses the general M-PSK case. Numerical results validate the analysis and reveal trade-offs among EH, modulation order, and detection performance. This architecture enables emerging applications such as quantum IoT, low-power quantum sensing, and secure short-range communication, where energy-constrained quantum transmitters are powered via RF signals for quantum optical transmission.

\noindent {\it Notation:} $\Tr(\cdot)$ represents the trace operator, $\mathrm{diag}([x_0,\ldots,x_{M-1}])$ denotes a $M\times M$ diagonal matrix with the specified elements on its main diagonal, $I$ is the identity matrix of appropriate dimension; $X^\dagger$ means the conjugate transpose (also called the Hermitian transpose) of the matrix $X$. A density operator $\rho$ is a Hermitian, positive semidefinite operator with $\Tr(\rho)=1$. An 
ensemble is mixed if any $\rho_i$  has rank $>1$, and pure if all $\rho_i=|\phi_i\rangle \langle \phi_i|$ are rank-one projectors with normalized, possibly non-orthogonal vectors $|\phi_i\rangle$.

\vspace{-0.3cm}
\section{System model}

We consider a WPCN-based quantum optical communication system consisting of three key devices: an RF energy transmitter (ET) that supports the energy sustainability of the optical link, a batteryless quantum transmitter (QT) equipped with an RF EH front-end (rectenna \cite{KRI, BRU}), and a quantum optical receiver (QR). The QT follows a {\it harvest-then-transmit} protocol, whereby it harvests energy from the ET's RF signal to power its laser source and optical modulation components for communication. Specifically, each transmission block (normalized to unit duration) is divided into two phases: i) in the first phase, occupying a fraction $t$ of the block, the QT harvests energy from the received RF signal, ii) in the second phase of duration $(1-t)$, the QT uses the harvested energy to transmit a quantum optical signal to the QR by using M-PSK modulation. RF-based EH is used here for its ubiquity, without loss of generality; the proposed framework can be extended to alternative EH sources, \textit{e.g.}, optical or visible light.

The ET transmits a continuous-wave RF energy signal with fixed power $P$, and the wireless channel between the ET and the QT is assumed to be constant over the considered transmission period, denoted by a deterministic\footnote{This deterministic channel assumption is reasonable due to the typically strong line-of-sight in RF energy harvesting and is sufficient for the scope of this work; block-fading can be considered in future studies.} channel coefficient $h$.
By using a linear EH model \cite{KRI2}, the QT harvests an energy equal to $E=Pt|h|^2$ which is used to generate an electromagnetic field in a coherent state. By assuming a symmetric M-PSK modulation, the coherent state vector is written as \cite[Ch. 5]{DJO}
\begin{align}
|\mu_i\rangle = e^{-\frac{|\mu_i|^2}{2}} \sum_{n=0}^{\infty} \frac{\mu_i^n}{\sqrt{n!}} |n\rangle,
\label{coherent}
\end{align}
where $\mu_i=\sqrt{a}e^{-\frac{\mi 2\pi i}{M}}$ is the complex amplitude of the $i$-th M-PSK symbol with $i=0,\ldots, M-1$, $\mi=\sqrt{-1}$, and $a=E/(1-t)$ is the optical transmit power over the second transmission phase (proportional to the mean photon number per symbol), and $|n\rangle$ denotes the number (Fock) state with $n \in \mathbb{N}$. For numerical tractability, the infinite-dimensional Fock basis in \eqref{coherent} is truncated to a finite size $N_\text{cut}$, preserving the dominant photon number contributions. The optical QT-QR link operates in the presence of thermal background radiation, modelled as a thermal quantum state with noise photons distributed according to the Bose-Einstein statistics. Let $N_a$ denote the average thermal noise photons; then, by using Glauber's P-representation, the density operator of the received field corresponding to the $i$-th M-PSK symbol can be expressed as follows 
\begin{align}
\rho_i=\frac{1}{\pi N_a}\int \exp\left(-\frac{|a-\mu_i|^2}{N_a} \right)|a\rangle \langle a|  d^2 a. \label{noise}
\end{align}
By using the generalized Laguerre polynomials, the elements  $\rho_i(m,n)$ of the $\rho_i$ matrix are given in closed form by \cite[eq. 5.135]{DJO}

\vspace{-0.1cm}
\begin{align}
\langle m | \rho_i | n \rangle =
\begin{cases}
\displaystyle
\frac{e^{-|\mu_i|^2 / (N_a + 1)}}{N_a + 1}
\sqrt{\frac{n!}{m!}}
\left( \frac{\mu_i^*}{N_a} \right)^{m - n}
\left( \frac{N_a}{N_a + 1} \right)^n \\
\hspace{1.5cm}
\times \, L_n^{(m-n)}\left( -\frac{|\mu_i|^2}{N_a(N_a + 1)} \right),
\quad m \ge n, \\[6pt]
\displaystyle
\langle n | \rho_i | m \rangle^*, \quad m < n,
\end{cases} \label{rho_elements}
\end{align}
where $L_n^{(\nu)}(x)$ denotes the associated Laguerre polynomial of degree $n$ and order $\nu$.

The quality of the quantum optical link is characterized by the symbol error probability, {\it i.e.}, the probability that the receiver selects an M-PSK symbol different from the one transmitted. In this work, we analyze the fundamental performance limit by evaluating the Helstrom bound \cite{HEL}, rather than considering a specific receiver architecture. The Helstrom bound quantifies the minimum achievable error probability under optimal quantum measurements and depends on the quantum states associated with each symbol and the corresponding measurement operators {\it i.e.,} set of positive operator-valued measure (POVM) matrices $\Pi_i$ (a generalization of projective measurements that enables optimal discrimination among non-orthogonal states \cite[Ch. 3.3.1]{DJO}). While such optimal receivers may be difficult to realize in practice, the Helstrom bound provides a benchmark against which more practical detection strategies ({\it e.g.,} Kennedy, Dolinar etc.) can be evaluated. We introduce a novel optimization problem that jointly selects the detection POVM operators $\Pi_i$ and the optimal transmission time fraction $t$ in order to maximize the effective rate, defined as the average number of bits successfully transmitted per channel use. Specifically, we have 
\begin{align}
[P1]\;\;\;&\max_{\{ t,\Pi_i \}}R(t,\Pi_i)=\log_2(M)(1-t)(1-P_e(t,\Pi_i)), \label{eq1}\\
&\quad \text{s.t. } \sum_{i=1}^M \Pi_i = I, \quad \Pi_i \succeq 0, \label{eq2} \\
&\;\;\;\;\;\;\;\;\;\;0<t<1, \label{eq3}
\vspace{-0.3cm}
\end{align} 
where $P_e$ denotes the average error probability (Helstrom bound). The objective function in \eqref{eq1} is the effective rate, measured in bits per channel use (BPCU), which serves as a key metric to capture the trade-off between EH and data transmission time. The constraints in \eqref{eq2} enforce the completeness and positive-semidefinite properties of the POVM measurement matrices, while \eqref{eq3} reflects the two-phase structure of the proposed protocol. Equal priors are assumed for all M-PSK symbols.

\vspace{-0.3cm}
\section{Optimal design and special cases}

In this section, we examine the solution to the above problem and present special cases that admit analytical expressions for the Helstrom bound.

\vspace{-0.3cm}
\subsection{Noiseless quantum channel with BPSK} 

In the first special case of interest, the transmitted symbols $0$ and $1$ correspond to the pure coherent states $|\mu\rangle$ and $|-\mu\rangle$, respectively. The quantum optical channel is assumed to be noiseless, with $N_a = 0$. Under these conditions, the received density operators are given by $\rho_0 = |\mu\rangle\langle\mu|$ and $\rho_1=|-\mu\rangle\langle -\mu|$. The POVM matrices are defined as $\Pi_0 = |\eta_0\rangle\langle \eta_0|$ and $\Pi_1 = I - \Pi_0$, where $|\eta_j\rangle$ and $\eta_j$ denote the $j$-th eigenket and eigenvalue, respectively, of the Hermitian operator $\Delta=\rho_1-\rho_0$, given by the eigenvalue equation \cite{HEL}
\begin{align}
\Delta|\eta_j\rangle = \eta_j|\eta_j\rangle. \label{spectrum}
\end{align}
It is worth noting that, since the density matrices $\rho_0$ and $\rho_1$ correspond to pure states ({\it i.e.,} rank-one operators), the Hermitian operator $\Delta$ has rank two and admits exactly one positive and one negative eigenvalue (symmetric around zero for the case of symmetric prior probabilities). This structure arises from the binary quantum hypothesis testing framework for pure states \cite[Ch. 5.3]{DJO}. In this case, the Helstrom bound admits an analytical expression given by
\begin{align}
P_e(t)&=\frac{1}{2\!}\left(1-\sqrt{1-|\langle \mu|\!-\!\mu\rangle|^2} \right)=\frac{1}{2}\left(1\!-\!\sqrt{1-e^{-4|\mu|^2}} \right) \nonumber \\
&=\frac{1}{2}\left(1-\sqrt{1-e^{-\frac{4P|h|^2 t}{1-t}}} \right).
\end{align}
The effective rate $R(t)=(1-t)(1-P_e(t))$ is unimodal, as shown by analyzing the properties of its first derivative {\it i.e.,} $R'(t)$ is strictly decreasing on $(0,1)$ and has opposite signs at the two ends of the interval; hence, it crosses zero exactly once, confirming its unimodal structure. The optimal $t^*$ is obtained by numerically solving the stationarity condition, namely the value of $t$ for which $R'(t)=0$.
\begin{proposition}\label{prop1}
In both the high-power regime ($P \rightarrow \infty$) and the low-power regime ($P \rightarrow 0$), the optimal time fraction allocated to EH vanishes, i.e., $t^* \rightarrow 0$.
\end{proposition}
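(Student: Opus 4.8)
The plan is to reduce the stationarity condition $R'(t^*)=0$ to a single scalar relation whose asymptotics can be read off directly, and then to convert the statement ``$t^*\to0$'' into a one-point sign inequality. Writing $c:=4P|h|^2$ and $s(t):=\sqrt{1-\exp(-ct/(1-t))}$, the effective rate becomes $R(t)=\tfrac{1-t}{2}\,(1+s(t))$, and I would record the facts already established just before the proposition: $R$ is unimodal on $(0,1)$, so its maximizer $t^*$ is the unique interior root of $R'$, with $R'>0$ on $(0,t^*)$ and $R'<0$ on $(t^*,1)$.

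First I would differentiate and factor. A direct computation using $u=t/(1-t)$, $s'=\tfrac{c e^{-cu}}{2s(1-t)^2}$, and the identity $e^{-cu}=1-s^2=(1-s)(1+s)$ gives the clean factorization
\begin{align}
R'(t)=\frac{1+s}{4s(1-t)}\;g(t),\qquad g(t):=c(1-s)-2s(1-t). \nonumber
\end{align}
Since the prefactor $\tfrac{1+s}{4s(1-t)}$ is strictly positive on $(0,1)$, the sign of $R'$ equals the sign of $g$, and the two functions share the root $t^*$; by unimodality $g>0$ on $(0,t^*)$ and $g<0$ on $(t^*,1)$. The key consequence I would exploit is the reduction: for any fixed $\epsilon\in(0,1)$, the implication $g(\epsilon)<0\Rightarrow t^*<\epsilon$ holds, because $g(\epsilon)<0$ forces $\epsilon$ into the region $(t^*,1)$. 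Thus it suffices to verify $g(\epsilon)<0$ for all small fixed $\epsilon$ in each limiting regime.

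Then I would evaluate $g(\epsilon)$ asymptotically with $\epsilon$ held fixed. In the high-power regime $c\to\infty$, the bound $1-s(\epsilon)=\bigl(1+\sqrt{1-e^{-c\epsilon/(1-\epsilon)}}\bigr)^{-1}e^{-c\epsilon/(1-\epsilon)}\le e^{-c\epsilon/(1-\epsilon)}$ shows $c\bigl(1-s(\epsilon)\bigr)\to0$ exponentially while $2s(\epsilon)(1-\epsilon)\to2(1-\epsilon)$, so $g(\epsilon)\to-2(1-\epsilon)<0$; hence $t^*<\epsilon$ for $c$ large. In the low-power regime $c\to0$, the expansion $s(\epsilon)\sim\sqrt{c}\,\sqrt{\epsilon/(1-\epsilon)}$ gives $g(\epsilon)\sim\sqrt{c}\bigl(\sqrt{c}-2\sqrt{\epsilon(1-\epsilon)}\bigr)<0$ for $c$ small; hence $t^*<\epsilon$. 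Letting $\epsilon\downarrow0$ yields $t^*\to0$ in both regimes. As a by-product, solving $g(t^*)=0$ asymptotically recovers the explicit rates $t^*\sim c^{-1}\ln(c/4)$ as $c\to\infty$ and $t^*\sim c/4$ as $c\to0$, which independently confirm the claim.

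I expect the main obstacle to be twofold. The decisive step is spotting the factorization that collapses $R'$ into the elementary sign function $g$; without it the stationarity condition is an awkward transcendental relation, and the $t\to1$ boundary behaviour is hard to control directly. The second subtlety is the low-power regime, where \emph{both} terms of $g(\epsilon)$ vanish, so a termwise limit is inconclusive and one must compare the orders $\sqrt{c}$ versus $c$; the sign reduction through $g(\epsilon)<0$ is precisely what makes this comparison decisive, rather than requiring a full solution of the stationarity equation.
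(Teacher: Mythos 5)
Your proof is correct, but it takes a genuinely different---and more rigorous---route than the paper's. The paper argues by a pointwise limit on the objective itself: for any fixed $t>0$, $P_e(t)\to 0$ as $P\to\infty$ (so $R(t)\to 1-t$) and $P_e(t)\to \tfrac{1}{2}$ as $P\to 0$ (so $R(t)\to \tfrac{1-t}{2}$); since both limiting profiles are decreasing in $t$, it concludes $t^*\to 0$. That argument is short and captures the physical intuition (near-perfect detection versus random guessing), but it silently passes from pointwise convergence of $R$ to convergence of its maximizer, a step that in general requires uniformity or extra structure---note that the limiting functions attain their suprema only at the boundary point $t=0$, which is excluded from the feasible set. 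Your approach avoids this gap entirely: by factoring $R'(t)=\frac{1+s}{4s(1-t)}\,g(t)$ with $g(t)=c(1-s)-2s(1-t)$ (a computation I verified; the identity $e^{-ct/(1-t)}=(1-s)(1+s)$ is used correctly) and invoking the unimodality established just before the proposition, you convert the claim $t^*<\epsilon$ into the one-point sign condition $g(\epsilon)<0$, which you then verify asymptotically in each regime---correctly handling the delicate low-power case, where both terms of $g(\epsilon)$ vanish and the comparison of orders $\sqrt{c}$ versus $c$ is what decides the sign. The price is the derivative computation; the payoff is a fully rigorous argument plus the explicit rates $t^*\sim c^{-1}\ln(c/4)$ and $t^*\sim c/4$, which the paper's method cannot produce.
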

\begin{proof}
We observe that $P_e(t) \to 0$ as $P \to \infty$ for any $t > 0$, hence $R(t) \to 1-t$, which is maximized at $t^* \rightarrow 0$ (sufficient energy is harvested even for vanishingly small $t$). Similarly, as $P \to 0$, we have $P_e(t) \to \frac{1}{2}$ (random guess), so $R(t) \to \frac{1-t}{2}$, also maximized at $t^*  \rightarrow 0$.
\end{proof}
The above behaviour is explained by the exponential nature of the Helstrom bound; at high power, even a small harvesting time $t$ yields sufficient energy for near-perfect detection, making it optimal to maximize communication time. At low power, detection remains unreliable regardless of harvesting time, so again it is preferable to prioritize communication. In both cases, performance is maximized as $t^* \to 0$.

\vspace{-0.2cm}
\subsection{Noisy quantum channel with BPSK} 

In this special case, we consider BPSK modulation over a thermal (noisy) quantum optical channel, consistent with the general system model. Although the Helstrom bound does not admit a simple closed-form expression as in the noiseless case, it still allows for a near-analytical evaluation. Specifically, the received density operators $\rho_0$ and $\rho_1$ correspond to displaced thermal states with average noise photon number $N_a > 0$, and are computed using the expression in \eqref{rho_elements} with $\mu_0 = \mu$ and $\mu_1 = -\mu$, respectively. The minimum error probability is then evaluated numerically as \cite{HEL}
\begin{align}
P_e(t) = \frac{1}{2}\left(1 - \sum_{\eta_i > 0} \eta_i \right),
\end{align}
by considering only the positive eigenvalues $\eta_i$ of the Hermitian operator $\Delta$. To compute the eigenvalues of $\Delta$, we first express the density operators $\rho_0$ and $\rho_1$ in a truncated Fock space of dimension $N_{\text{cut}}$, yielding finite-dimensional matrix representations. The Hermitian operator $\Delta$ is then diagonalized numerically via eigenvalue decomposition (see \eqref{spectrum}), resulting in $N_{\text{cut}}$ real eigenvalues. Due to the symmetry in prior probabilities, $\Delta$ is traceless $ (\Tr(\Delta) = 0)$, and thus its eigenvalues are symmetric around zero. This procedure yields a numerically accurate estimation of the Helstrom bound for binary coherent-state discrimination in the presence of thermal noise. We also note that the optimal POVM in this case is defined by projection onto the positive and non-positive eigenspaces of $\Delta$, respectively {\it i.e.,}  $\Pi_0 = \sum_{\eta_i > 0} |\eta_i\rangle \langle \eta_i|$, $\Pi_1 = I - \Pi_0$. Since a near-closed-form expression for $P_e$ as a function of time is available, and given the unimodality of the effective rate function, the solution to the corresponding optimization problem can be efficiently obtained via numerical search over a single variable. In practice, the time interval is symmetrically discretized into $K$ points within the open interval $(0,1)$, and an exhaustive search is performed by computing and comparing the candidate values $R(t_i)$, where $i=1,\dots,K$.

\vspace{-0.5cm}
%\subsection{The general case: M-PSK with $M>2$}
\subsection{Noisy quantum channel with M-PSK and $M>2$}

For the general noisy M-PSK case, the Helstrom bound does not admit closed-form or near-analytical expressions, and solving the original problem [P1] requires a more advanced optimization framework. Specifically, the Helstrom bound can be formulated as a semidefinite program (SDP) over a set of POVM matrices $\{\Pi_i\}$ that minimize the average error probability. For a given EH fraction $t$, the Helstrom bound can be written as
\begin{align}
[P2]\;\;\;&\min_{\{\Pi_i \}} P_e(t)=\frac{1}{M}\sum_{i=1}^M \Tr\left(  (I - \Pi_i)\rho_i \right), \label{eq11}\\
&\quad \text{s.t. } \sum_{i=1}^M \Pi_i = I, \quad \Pi_i \succeq 0,
\end{align}
where the objective function in \eqref{eq11} represents the Helstrom bound $P_e(t)=1-\frac{1}{M}\sum_i\Tr(\Pi_i\rho_i)=\frac{1}{M}\sum_i\Tr((I-\Pi_i) \rho_i)$. As before, to make the problem tractable, the infinite-dimensional Fock space is truncated to a finite dimension $N_\text{cut}$. The resulting finite-dimensional problem is a standard SDP and can be efficiently solved using convex optimization tools such as \texttt{CVX}\footnote{The SDP is solved using interior-point methods, with polynomial per-iteration complexity $\mathcal{O}(M^2 N_{\text{cut}}^6)$. The method is tractable for moderate $M$; scalability is polynomial but becomes costly for large constellations.}. Since the Helstrom bound is computed numerically through this framework, and given the unimodality of the effective rate function, the solution to [P1] can be efficiently obtained by applying the same single-variable numerical search approach described earlier. Specifically, the time interval is discretized, and the corresponding effective rate values are evaluated as $R(t_i) = \log_2(M) (1 - t_i)(1 - P_e(t_i))$ for each discretization point $t_i$.

\vspace{-0.3cm}
\subsection{Noiseless quantum channel with M-PSK and $M>2$}
For the case of a noiseless quantum channel with M-PSK modulation and symmetric prior probabilities (which is the case considered in this work), the POVM detection operators admit a closed-form expression given by the square-root measurement (SRM) \cite[Ch. 5.10]{DJO}. In this case, there is no need to solve an SDP, as the optimal POVM matrices are explicitly given by
\begin{align}
\Pi_i = \rho^{-1/2}\rho_i \rho^{-1/2} =\rho^{-1/2} |\mu_i\rangle\langle \mu_i| \rho^{-1/2},
\end{align}
where $\rho$ is the unormalized (ensemble) state defined as $\rho=\sum_{i=1}^M \rho_i$. The corresponding Helstrom bound for a given EH time fraction $t$ can then be computed by using the error probability definition in \eqref{eq11} as $P_{e}(t)=1-(1/M)\sum_i\Tr(\rho^{-\frac{1}{2}}\rho_i \rho^{-\frac{1}{2}}\rho_i)=1-(1/M)\sum_i \left( \langle\mu_i|\rho^{-\frac{1}{2}}|\mu_i\rangle \right)^2$. A closed-form expression for $P_e(t)$ is derived in Appendix \ref{ap1}. For large/finite $M$, we derive a simple closed-form expression for the SRM error probability, as stated in the following proposition 
\begin{proposition}
For a large and finite modulation order $M\gg \alpha$, the SRM error probability is approximated as 
\begin{align}
	P_e(t)=1-\frac{e^{-\frac{P|h|^2 t}{1-t}}}{M}\left(\sum_{k=0}^{M-1} \sqrt{ \frac{\left(\frac{P|h|^2t}{1-t}\right)^k}{k!}} \right)^2 \label{app2e}
\end{align}
\end{proposition}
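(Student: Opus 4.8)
The plan is to exploit the geometric uniformity of the symmetric M-PSK ensemble and then diagonalize the ensemble operator $\rho$ in the large-$M$ limit. First I would observe that every symbol state is a common phase rotation of a single reference state: writing the number operator $\hat{n}=\sum_n n\,|n\rangle\langle n|$ and $U=e^{-\mi 2\pi\hat{n}/M}$, we have $|\mu_i\rangle=U^i|\mu_0\rangle$ with $|\mu_0\rangle=|\sqrt{a}\rangle$ and constant mean photon number $|\mu_i|^2=a=P|h|^2 t/(1-t)$. Since $U^M=I$, the ensemble operator $\rho=\sum_{i=0}^{M-1}U^i|\mu_0\rangle\langle\mu_0|U^{-i}$ satisfies $U\rho U^{-1}=\rho$, so $\rho^{-1/2}$ commutes with $U$ as well. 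This yields $\langle\mu_i|\rho^{-1/2}|\mu_i\rangle=\langle\mu_0|U^{-i}\rho^{-1/2}U^i|\mu_0\rangle=\langle\mu_0|\rho^{-1/2}|\mu_0\rangle$ for all $i$, so the SRM expression collapses to $P_e(t)=1-\bigl(\langle\mu_0|\rho^{-1/2}|\mu_0\rangle\bigr)^2$ and it remains to evaluate a single matrix element.

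Next I would compute $\rho$ in the Fock basis. Using $\langle n|\mu_i\rangle=e^{-a/2}(\sqrt{a})^{\,n}e^{-\mi 2\pi i n/M}/\sqrt{n!}$, the finite geometric phase sum gives $\sum_{i=0}^{M-1}e^{-\mi 2\pi i(m-n)/M}=M$ when $m\equiv n\pmod{M}$ and $0$ otherwise. Hence $\rho$ couples only Fock indices differing by a multiple of $M$, and its diagonal entries are $\langle n|\rho|n\rangle=M\,e^{-a}a^n/n!$.

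The crucial and, I expect, most delicate step is the large-$M$ truncation. Because the photon number obeys a Poisson law of mean $a$ and we assume $M\gg a$, the weight placed on states $n\ge M$ is exponentially small (the Poisson tail). Consequently the off-diagonal couplings $n\leftrightarrow n+M$ and the diagonal entries for $n\ge M$ are negligible, so on the effective support $0\le n<M$ one has $\rho\approx M\sum_{n=0}^{M-1}e^{-a}(a^n/n!)\,|n\rangle\langle n|$. The care required here is to verify that the neglected terms also contribute negligibly to the matrix element $\langle\mu_0|\rho^{-1/2}|\mu_0\rangle$, even though the factor $\sqrt{n!/a^n}$ appearing in $\rho^{-1/2}$ grows with $n$; this is controlled because the same element is weighted by the coherent-state probabilities $|\langle n|\mu_0\rangle|^2=e^{-a}a^n/n!$, which are themselves Poisson-suppressed beyond $n=M$.

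Finally, with $\rho$ approximately diagonal I would invert it termwise to get $\rho^{-1/2}\approx (e^{a/2}/\sqrt{M})\sum_{n=0}^{M-1}\sqrt{n!/a^n}\,|n\rangle\langle n|$ and evaluate $\langle\mu_0|\rho^{-1/2}|\mu_0\rangle=\sum_{n=0}^{M-1}|\langle n|\mu_0\rangle|^2\,(M e^{-a}a^n/n!)^{-1/2}$. The Poisson weight and the inverse square root combine to $\sqrt{a^n/n!}$, so $\langle\mu_0|\rho^{-1/2}|\mu_0\rangle=(e^{-a/2}/\sqrt{M})\sum_{n=0}^{M-1}\sqrt{a^n/n!}$. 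Squaring this and substituting $a=P|h|^2 t/(1-t)$ into $P_e(t)=1-\bigl(\langle\mu_0|\rho^{-1/2}|\mu_0\rangle\bigr)^2$ reproduces \eqref{app2e}; a consistency check against the exact closed form of Appendix~\ref{ap1} in the regime $M\gg a$ would complete the argument.
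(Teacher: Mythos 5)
Your proposal is correct, and it reaches \eqref{app2e} by a genuinely different route than the paper. The paper never touches the ensemble operator $\rho$ in Fock space: it works entirely with the $M\times M$ Gram matrix $G(i,j)=\langle\mu_i|\mu_j\rangle$, uses its circulant structure to get the eigenvalues $\lambda_k$ as a DFT of the first row, approximates that DFT sum by the Fourier integral $\int_0^1 f_\alpha(x)e^{-2\pi\mi kx}dx = e^{-\alpha}\alpha^k/k!$ (this is where $M\gg\alpha$ enters, as a sampling/aliasing condition), and then substitutes $\lambda_k\approx Me^{-\alpha}\alpha^k/k!$ into the Appendix~\ref{ap1} identity $P_e=1-\bigl(\tfrac{1}{M}\sum_k\sqrt{\lambda_k}\bigr)^2$. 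You instead exploit the phase-rotation symmetry $|\mu_i\rangle=U^i|\mu_0\rangle$, $[U,\rho]=0$, to collapse the SRM sum to the single element $\langle\mu_0|\rho^{-1/2}|\mu_0\rangle$, and then diagonalize $\rho$ (approximately) in the Fock basis via the selection rule $m\equiv n\ \mathrm{mod}\ M$. The two derivations are dual to one another --- $G$ and $\rho$ share their nonzero spectrum, and your mod-$M$ coupling structure is exactly the aliasing in the paper's DFT: keeping all terms, each residue-$r$ block of $\rho$ is the rank-one outer product $Me^{-a}v_rv_r^\dagger$ with eigenvalue $Me^{-a}S_r$, $S_r=\sum_{n\equiv r}a^n/n!$, which equals the paper's exact $\lambda_r$. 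What your route buys is a more self-contained and physically transparent argument: the symmetry reduction avoids the Gram/circulant machinery, and the error being neglected is visibly a Poisson tail. Indeed, your one acknowledged delicate step (inverting an approximated operator) can be removed entirely: because each block is \emph{exactly} rank one, one gets the exact identity $\langle\mu_0|\rho^{-1/2}|\mu_0\rangle=\tfrac{e^{-a/2}}{\sqrt{M}}\sum_{r=0}^{M-1}\sqrt{S_r}$, and the only approximation is $S_r\approx a^r/r!$ for $M\gg a$ --- precisely the same truncation the paper makes on $\lambda_k$. What the paper's route buys is reuse: its Appendix~\ref{ap1} closed form is needed anyway for finite $M$, so the large-$M$ result follows by a two-line Fourier argument on top of existing machinery.
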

\vspace{-0.5cm}
\begin{proof}
See Appendix \ref{ap2}.
\end{proof}
Given the closed-form expressions of $P_e(t)$, the solution to [P1] can be obtained numerically by employing a single-variable optimization framework, similar to the one described earlier. For the special case of large $M$, we also observe that the statement of Proposition \ref{prop1} also holds. For the low $P$ regime, the proof is provided in Appendix~\ref{ap3}, whereas for the high $P$ regime, it is validated numerically since the above expression holds for $M \gg \alpha$ and thus cannot be applied as $P \to \infty$.

\begin{figure}[t]
\centering
\includegraphics[width=0.84\linewidth]{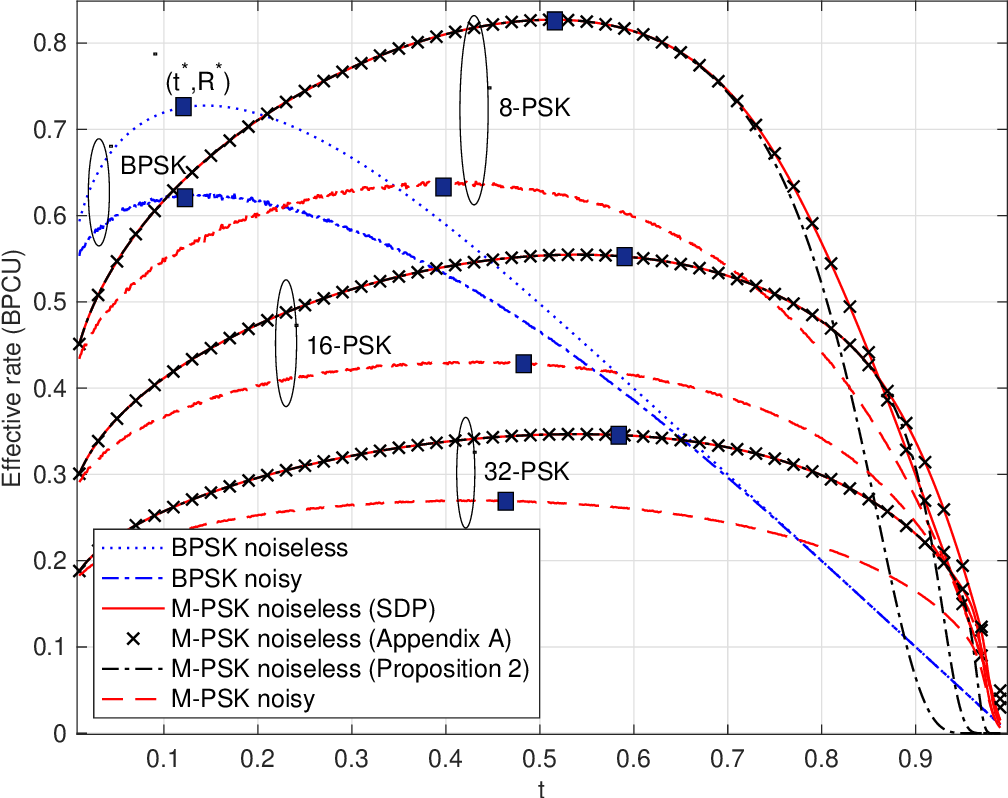}
\vspace{-0.1cm}
\caption{Effective rate versus EH fraction $t$; $P=1$, $M=\{2, 8, 16, 32 \}$, and $N_a=\{0, 0.5 \}$.}\label{fig1}
\end{figure}

\begin{figure}[t]
\centering
\includegraphics[width=0.84\linewidth]{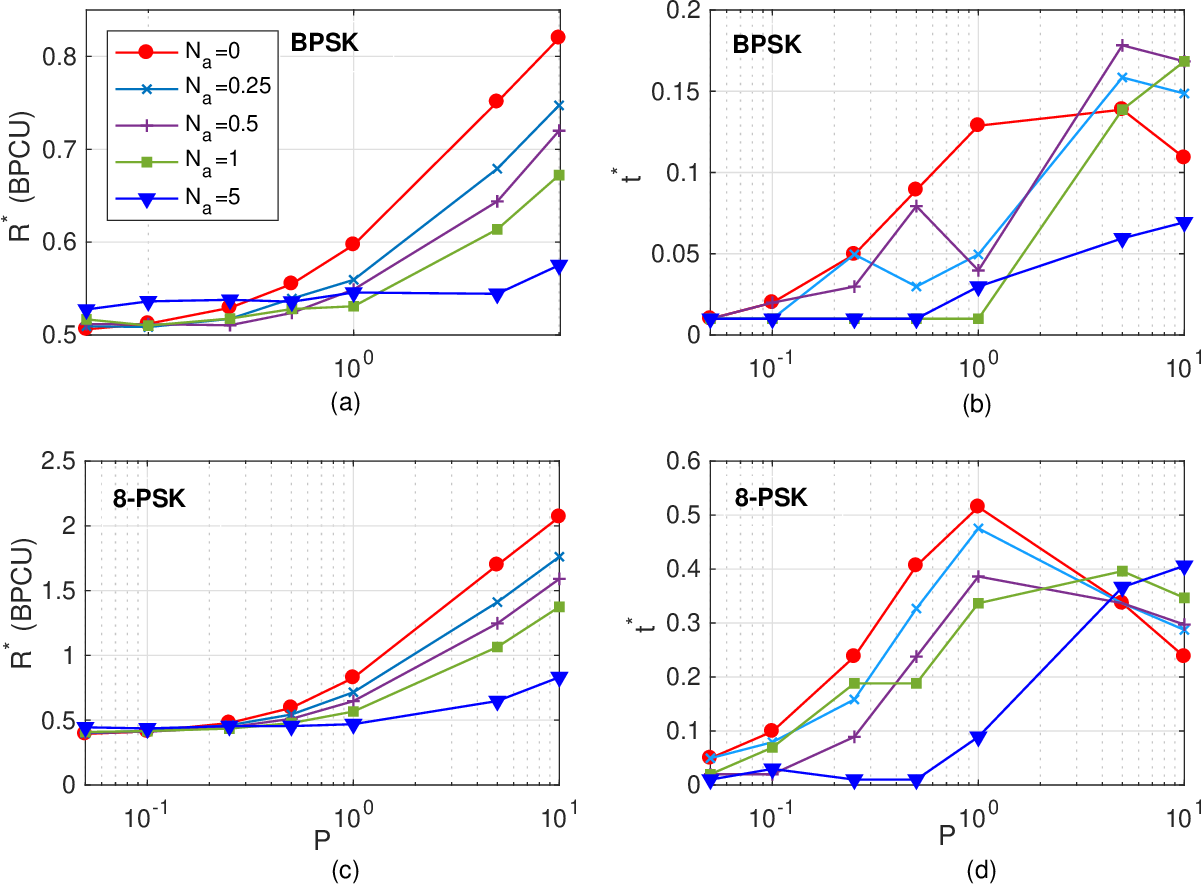}
\vspace{-0.25cm}
\caption{(a), (c) Maximum effective rate ($R^*$) versus transmit power $P$ for BPSK, and $8$-PSK, respectively. (b), (d) Optimal time fraction ($t^*$) versus transmit power $P$ for BPSK, and $8$-PSK, respectively; $N_a=\{0, 0.25, 0.5, 1, 5 \}$.}\label{fig2}
\end{figure}

\vspace{-0.4cm}
\section{Numerical results \& discussion}

The performance of the proposed WPCN-based design is validated through numerical evaluations; the simulation setting assumes $|h|^2=1$, $N_\text{cut}=40$, and $K=1000$ (time resolution).

Fig.~\ref{fig1} illustrates the effective rate as a function of the time fraction $t$ for a system configuration with $P = 1$, $N_a = \{0, 0.5\}$, and $M = \{2, 8, 16, 32\}$. The effective rate demonstrates a distinct unimodal behavior, featuring a single peak. As the modulation order increases, the maximum achievable effective rate decreases, while the optimal time fraction $t^*$ shifts to larger values. This trend reflects the fact that higher-order PSK constellations demand more energy to ensure reliable symbol detection at the receiver. Moreover, the presence of Gaussian noise further reduces the effective rate due to its detrimental effect on detection performance. Interestingly, the impact of noise diminishes with increasing modulation order, as the symbol spacing becomes the dominant factor in the error probability. For $M > 2$, the results also confirm the accuracy of the closed-form expression provided in Appendix~\ref{ap1}, along with the approximation in~\eqref{app2e}, which becomes increasingly precise as $M$ grows and satisfies $M \gg \alpha$. The optimal $t^*$ from the noiseless case serves as a useful approximation under noise. As thermal noise increases, reduced quantum state distinguishability shifts the EH-communication trade-off, widening the gap; for BPSK, this gap remains small due to the stronger separation between quantum states.

Figures ~\ref{fig2}(a) and \ref{fig2}(c) present the maximum effective rate as a function of the transmit power~$P$ for various values of thermal noise~$N_a$, for BPSK and $8$-PSK modulation schemes, respectively. As expected, the system performance improves with increasing transmit power, since the quantum transmitter harvests more energy, thereby enhancing the detection probability. This improvement is more pronounced for $8$-PSK due to its higher sensitivity to energy availability. Conversely, thermal noise generally degrades the effective rate; as the average noise level~$N_a$ increases, performance deteriorates, with $8$-PSK being more susceptible to noise-induced degradation than BPSK. An interesting observation, however, is that in the very low-power regime, the effective rate slightly increases with~$N_a$. This counterintuitive behavior arises because, for extremely weak coherent states, added thermal noise slightly reduces the overlap between symbols, thereby improving their distinguishability under the Helstrom bound. 

Figures ~\ref{fig2}(b) and \ref{fig2}(d) show the corresponding optimal time fraction~$t^*$ that maximizes the effective rate for BPSK and $8$-PSK, respectively. For low~$N_a$ values, the optimal fraction exhibits a concave trend {\it i.e.,} it initially increases with power up to a certain point and then decreases as power continues to grow (with $t^*\rightarrow 0$ as $P\rightarrow \infty$). For higher~$N_a$ levels, the optimal~$t^*$ increases monotonically with power. Additionally, it can be seen that for a given transmit power, the optimal harvesting time fraction decreases as the noise level increases.

Overall, this work presents a novel optimization framework that jointly designs quantum measurement and EH time to maximize the effective rate in a WPCN-assisted quantum optical link. Theoretical and simulation results reveal a fundamental trade-off between EH and transmission, confirming its effectiveness for BPSK and M-PSK in both noisy and noiseless cases. Future work includes its extension to scenarios with channel fading, non-linear energy harvesting models, and practical quantum receivers (e.g., Dolinar).

\vspace{-0.1cm}
\appendices

\vspace{-0.1cm}
\section{SRM error probability}\label{ap1}
To provide an analytical expression for the error probability, firstly we compute the Gram matrix with elements \cite{HEL}
\vspace{-0.1cm}
\begin{align}
G(i,j)=\langle \mu_i | \mu_j \rangle = e^{-\alpha + \alpha e^{2\pi \mi (j - i)/M}}.
\end{align}
The corresponding Gram matrix $G \in \mathbb{C}^{M \times M}$ is a circulant matrix\footnote{A matrix \( G \in \mathbb{C}^{M \times M} \) is called \emph{circulant} if there exists a vector $[\gamma_0, \gamma_1, \dots, \gamma_{M-1}] \in \mathbb{C}^M$ such that $G(i,j) = \gamma_{(j - i) \bmod M},\;\;\text{for all}\; i, j = 0, \dots, M-1$.} with first row
\begin{align}
\gamma_\ell = \langle \mu_0 | \mu_\ell \rangle\;=f_{\alpha}\left(\frac{\ell}{M} \right),\;\textrm{with}\;\ell=0,\ldots, M-1,
\end{align}
where $f_\alpha(x) =e^{-\alpha [1 - \cos(2\pi x)] + \mi \alpha \sin(2\pi x)}$ is $1$-periodic function. Let $\lambda_k$ be the eigenvalues of $G$, given by the discrete Fourier transform (DFT) of the first row; the eigenvalues are guaranteed to be real due to conjugate symmetry in $\gamma_\ell$ ({\it i.e.,} $\gamma_{M - \ell} = \overline{\gamma_\ell}$) and the structure of the DFT, so we have
\begin{align}
\lambda_k &= \sum_{\ell=0}^{M-1} f_{\alpha}\left(\frac{\ell}{M} \right)e^{-\frac{2\pi \mi k \ell}{M}},\;\;\textrm{with}\;k = 0, \dots, M-1.\label{exp1}
\end{align}
Since $G$ is circulant, it can be diagonalized by the DFT matrix $F$ (with $F(i,j)=\frac{1}{\sqrt{M}}e^{-\frac{2\pi\mi (i\times j)}{M}}$ where $i,j=0,\ldots,M-1$) as $G=F^\dagger \mathrm{diag}([\lambda_0,\ldots,\lambda_{M-1}])F$, and the SRM operator is defined as \cite[Ch. 5.10.1]{DJO}
\begin{align}
G^{-1/2} G &= F^\dagger \mathrm{diag}([\sqrt{\lambda_0},\ldots,\sqrt{\lambda_{M-1}}])F.
%(F^\dagger \cdot \mathrm{diag}([\sqrt{1/\lambda_0},\ldots,\sqrt{1/\lambda_{M-1}}]) \cdot F) \nonumber \\
%&\;\;\;\;\;\times (F^\dagger \mathrm{diag}([\lambda_0,\ldots,\lambda_{M-1}])F) \nonumber \\
%& =F^\dagger \mathrm{diag}([\sqrt{\lambda_0},\ldots,%\sqrt{\lambda_{M-1}}])F.
\end{align}
By using simple algebraic computations, we can show that the diagonal elements of the above matrix are equal, therefore
\begin{align}
 \left[G^{-1/2} G\right]_{ii}=\frac{1}{M}\sum_{k=0}^{M-1}\sqrt{\lambda_k}\;\;\textrm{for}\;\;i=1,\ldots,M-1.
\end{align}
The SRM error probability can be expressed equivalently as a function of the Gram matrix as  
\begin{align}
&P_e(t)=1-\frac{1}{M}\sum_i \left( \langle\mu_i|\rho^{-\frac{1}{2}}|\mu_i\rangle \right)^2 \nonumber \\
&\!=\! 1\!-\!\frac{1}{M} \sum_{i=0}^{M-1} \left| \left[G^{-1/2} G\right]_{ii} \right|^2\!=\!1\!-\!\left( \frac{1}{M} \sum_{k=0}^{M-1} \sqrt{\lambda_k} \right)^2. \label{pe}
\end{align}
%%%%%%%%%%%%%%%%%%%%%%%%%%%%%%%%%%%%%%%%%%%%%%%%%%%%%%%%%%%%
%%%%%%%%%%%%%%%%%%%%%%%%%%%%%%%%%%%%%%%%%%%%%%%%%%%%%%%%%%%%
%%%%%%%%%%%%%%%%%%%%%%%%%%%%%%%%%%%%%%%%%%%%%%%%%%%%%%%%%%%%
\vspace{-0.2cm}
\section{SRM error probability for large and finite $M$}\label{ap2}

Extending the results of Appendix A, asymptotically for large $M$, the eigenvalue in \eqref{exp1} admits a closed-form Fourier series representation. Specifically, we have  
\begin{align}
\frac{1}{M}\sum_{\ell=0}^{M-1} f_\alpha\left( \frac{\ell}{M} \right) e^{-2\pi \mi k \ell / M}\rightarrow \int_0^1 f_\alpha(x) \, e^{-2\pi \mi k x}dx.\label{e2}
\end{align}
The above expression requires a sampling resolution $1/M$ fine enough to resolve the variations of $f_\alpha(x)$, which leads to the condition $M \gg \alpha$. By using Taylor expansion $f_\alpha(x)= e^{-\alpha} \sum_{n=0}^\infty \frac{\alpha^n}{n!} e^{2\pi \mi n x}$ and \eqref{e2} can be written as
\begin{align}
\int_0^1 f_\alpha(x) \, e^{-2\pi \mi k x}dx = e^{-\alpha}\frac{\alpha^k}{k!}. \label{e3}
\end{align}
By combining  \eqref{e3}, \eqref{e2}, and substituting into \eqref{exp1}, we have
\begin{align}
&\lambda_k \approx M e^{-\alpha}\frac{\alpha^k}{k!}\Rightarrow
\frac{1}{M} \sum_{k=0}^{M-1} \sqrt{\lambda_k} \approx \frac{1}{\sqrt{M}}\sum_{k=0}^{M-1} \sqrt{ e^{-\alpha}\frac{\alpha^k}{k!}},
\end{align}
and thus the SRM error probability in \eqref{pe} becomes
\begin{align}
P_e(t)=1-\frac{e^{-\alpha}}{M}\left(\sum_{k=0}^{M-1} \sqrt{ \frac{\alpha^k}{k!}} \right)^2\;\text{with}\;M\gg \alpha. \label{exx1}
\end{align}
%%%%%%%%%%%%%%%%%%%%%%%%%%%%%%%%%%%%%%%%%%%%%%%%%%%%%%%%%%%%
%%%%%%%%%%%%%%%%%%%%%%%%%%%%%%%%%%%%%%%%%%%%%%%%%%%%%%%%%%%%
%%%%%%%%%%%%%%%%%%%%%%%%%%%%%%%%%%%%%%%%%%%%%%%%%%%%%%%%%%%%
%%%%%%%%%%%%%%%%%%%%%%%%%%%%%%%%%%%%%%%%%%%%%%%%%%%%%%%%%%%%
\section{SRM error probability for large $M$ and low $P$}\label{ap3}

For $P \rightarrow 0$, we have $\alpha \rightarrow 0$, and thus equation~\eqref{exx1} simplifies to $P_e(t) = 1 - \frac{1}{M}$. In this regime, the effective rate becomes $R(t) = \frac{\log_2(M)(1 - t)}{M}$, which attains its maximum value $R^* \rightarrow \frac{\log_2(M)}{M}$ as $t^* \rightarrow 0$.

\end{document}